\newcolumntype{L}[1]{>{\raggedright\let\newline\\\arraybackslash\hspace{0pt}}m{#1}}
\newcolumntype{C}[1]{>{\centering\let\newline\\\arraybackslash\hspace{0pt}}m{#1}}
\newcolumntype{R}[1]{>{\raggedleft\let\newline\\\arraybackslash\hspace{0pt}}m{#1}}
\newtheorem{proposition}{{\bf \sc Proposition}}
\theoremstyle{remark}
\newcommand{\bge}{\begin{equation}}
\newcommand{\ene}{\end{equation}}
\begin{document}

    \title{Network Effects of Tariffs\thanks{%
    I thank Alberto Dalmazzo and Roberto Rozzi for very helpful comments.}}
	
	\author[a,b,*]{Paolo Pin} 
    	\affil[a]{Dipartimento di Economia Politica e Statistica, Universit\`a di Siena}
        \affil[b]{BIDSA, Universit\`a Bocconi}
	\affil[*]{Corresponding author. Email: paolo.pin@unisi.it}	
	
    \date{April 2025}
\maketitle
	
\vspace*{-\baselineskip}
			
\begin{abstract}
We develop a model in which country-specific tariffs shape trade flows, prices, and welfare in a global economy with one homogeneous good. Trade flows form a Directed Acyclic Graph (DAG), and tariffs influence not only market outcomes but also the structure of the global trade network. A numerical example illustrates how tariffs may eliminate targeted imports, divert trade flows toward third markets, expose domestic firms to intensified foreign competition abroad, reduce consumer welfare, and ultimately harm the country imposing the tariff.

\end{abstract}

\small \noindent \textbf{Keywords:} Tariffs, Trade Networks, Directed Acyclic Graphs.

\small \noindent \textbf{JEL Classification:} F13 (Trade Policy; International Trade Organizations), D85 (Networks and Network Formation).

\section{Introduction}

The role of tariffs in shaping global trade patterns has attracted renewed attention, following episodes of rising protectionism, trade wars, and the use of tariffs as tools of political pressure. A series of empirical and policy discussions has highlighted how tariffs affect not only bilateral trade volumes but also broader macroeconomic variables such as aggregate demand, inflation, and sectoral production structures. These contributions emphasize key mechanisms such as asymmetric tariff pass--through, the disruption of global value chains, and the potential for retaliatory measures to further destabilize trade relations.

However, much of this discussion remains partial and linear: tariffs are often analyzed in isolation or as shocks to fixed structures. What is often missing is a general equilibrium perspective that takes into account how tariffs endogenously alter the network of global trade flows itself. When tariff changes induce countries to redirect their exports, the resulting reconfiguration of trade networks can lead to discontinuous, non--linear effects on prices, quantities, and welfare across the system.

Recent work has addressed these issues. \citet{fajgelbaum2019return} study the distributional effects of tariffs in general equilibrium. \citet{amiti2019effects} provide empirical evidence on asymmetric tariff pass--through between China and the United States.
\cite{grossman2024tariffs} study the effects of unanticipated tariffs.
\citet{blanchard2025global} highlight how supply chain disruptions propagate the effects of tariffs across sectors. Earlier foundational work by \citet{grossman1994protection} models endogenous tariff-setting in a political economy framework. Although these contributions provide important insights, a fully network-based general equilibrium analysis, where trade flows endogenously respond to strategic tariff choices, has received comparatively less attention.

This short paper takes a small step in that direction. We develop a simple model in which tariffs influence not only prices and welfare, but also the structure of global trade networks. Trade flows form a Directed Acyclic Graph (DAG), and tariff changes can induce discrete reconfigurations of this network. A fictitious numerical example, based on  trade between the EU, USA, and China, illustrates how  tariff changes can eliminate trade links, alter equilibrium prices and quantities, and generate unintended welfare effects -- including intensified foreign competition for domestic firms.

Understanding how tariffs reshape the global trade network, and how these changes feed back into welfare outcomes, is essential for a comprehensive evaluation of trade policy. Our framework provides a foundation for future work on endogenous tariff-setting as a network formation game, where countries strategically influence not just bilateral flows but the entire structure of global trade.
Because there is limited historical evidence on sudden, large--scale tariff changes and experimental validation is not feasible, theoretical modeling becomes essential. For reasons of tractability and interpretability, in the next section we focus on a simple model that is analytically solvable. Section \ref{example} provides an example and Section \ref{conclusion} concludes.

\section{A Model of Tariffs}
\label{model}

We develop a simple model in which country-specific tariffs shape trade flows, prices, and welfare in a global economy with $n$ countries and one homogeneous good.

Each country $i$ is characterized by an indirect supply function $s_i (q): \mathbb{R}_+ \rightarrow \mathbb{R}_+$ and an indirect demand function $d_i (q): \mathbb{R}_+ \rightarrow \mathbb{R}_+$. 
$s_i (q)$ is continuous and strictly increasing.
$d_i (q)$ is continuous and strictly decreasing for every $q$ such that $d_i (q)>0$, and $\lim_{q \rightarrow \infty} d_i (q)=0$.

Let \( t_{ij} \) denote the ad valorem (i.e., value-based) tariff imposed by country \( i \) on imports from country \( j \). Collect all such tariffs into a matrix \( \mathbf{t} \in \mathbb{R}_+^{n \times n} \). This specification is general: off-diagonal entries \( t_{ij} \) with \( i \neq j \) represent bilateral tariffs on international trade, while we set diagonal terms \( t_{ii} =0\) for every country $i$. 

Let \( \mathbf{q} \in \mathbb{R}_+^{n \times n} \) denote the matrix of trade flows, where \( q_{ij} \) represents the quantity produced in country \( j \) and consumed in country \( i \). Define:
\begin{itemize}
  \item \( \mathbf{p}_c^T = (p^T_{1,c}, \dots, p^T_{n,c})^\top \): vector of consumer prices,
  \item \( \mathbf{p}_f^T = (p^T_{1,f}, \dots, p^T_{n,f})^\top \): vector of producer prices.
\end{itemize}

We define here below a \emph{tariff equilibrium}.
Equilibrium is determined by two conditions: consumer market clearing and firm destination choice. Consumers demand goods according to local prices, while producers allocate supply toward the most profitable destinations after adjusting for tariffs.
In each country \( i \), the \textbf{consumer market clearing condition} requires that total demand matches the quantity available for consumption. Given the indirect demand function, this implies:
\begin{equation}
    \label{clearing}
p^T_{i,c} = d_i \left( \sum_j q_{ij} \right) ,
\end{equation}
where the sum is over all sources of supply (both domestic and foreign).

Similarly, each country \( j \) must allocate its production across all destination markets. The \textbf{firm selection condition} states that supply is determined by:
\begin{equation}
    \label{selling}
p^T_{j,f} =  s_j \left( \sum_i q_{ij} \right) ,
\end{equation}
subject to firms in country \( j \) choosing to serve only those destinations \( i \) that yield the highest \emph{effective revenue} per unit, defined as the consumer price adjusted for the tariff $
\frac{p^T_{i,c}}{1+ t_{ij}}$.
Formally, this selection condition can be written as:
\begin{equation}
    \label{selection}
p^T_{j,f} = \arg\max_h \left\{ \frac{p^T_{h,c}}{1 + t_{hj}} \right\}
\end{equation}

That is, producers in country \( j \) supply only the markets -- whether domestic or foreign -- where their net revenue is maximal. If multiple destinations yield the same maximum effective revenue, firms are indifferent among them; otherwise, trade flows are directed exclusively toward the most profitable markets.

These two conditions jointly determine a tariff equilibrium in which each representative firm supplies only the most profitable destination markets, taking prices as given, and consumer prices adjust to clear domestic markets given the available supplies. Firms allocate their exports toward destinations offering the highest effective revenues (consumer prices adjusted for tariffs). The resulting allocation of trade flows shapes the global trade network and determines market outcomes.



We now characterize the existence and structure of a tariff equilibrium in terms of trade flows.
Formally, the global trade structure can be represented by a \emph{directed network}, where nodes correspond to countries, and a directed link from node \( j \) to node \( i \) is present if and only if \( q_{ij} > 0 \).

We need another definition: a \textbf{Directed Acyclic Graph (DAG)} is a directed network that contains no directed cycles. That is, it is impossible to start at a node, follow a sequence of directed links, and return to the starting node.

\begin{proposition}
\label{prop:dag_existence}
Given any set of primitives of the model \( \left( \left\{ d_i \right\}_{i \in \{1 ,\dots, n  \}} , \left\{ s_i \right\}_{i \in \{1 ,\dots, n  \}}  , \mathbf{t} \right) \), there exists a tariff equilibrium.  
Moreover, if \( t_{ij} > 0 \) for any pair of countries \( i \ne j \), the equilibrium  network is a DAG.
\end{proposition}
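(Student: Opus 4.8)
The plan is to prove the two assertions in turn: first \emph{existence} of a tariff equilibrium, and then \emph{acyclicity} of the induced network under strictly positive cross-country tariffs. For existence I would recast the equilibrium as a nonlinear complementarity problem (NCP) in the flow matrix $\mathbf{q}$. The key observation is that, once flows are fixed, conditions \eqref{clearing} and \eqref{selling} \emph{define} the prices, $p^T_{i,c}=d_i(\sum_j q_{ij})$ and $p^T_{j,f}=s_j(\sum_i q_{ij})$, so consumer market clearing and the supply identity hold by construction, and global balance $\sum_i\sum_j q_{ij}=\sum_j\sum_i q_{ij}$ is automatic. What remains is the selection condition, which, writing $\tau_{ij}=1+t_{ij}$, amounts to the statement that $p^T_{j,f}=\max_h p^T_{h,c}/\tau_{hj}$ for every producing country and that flow is positive only on the destinations attaining this maximum. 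Collecting the domestic-versus-export arbitrage, this is exactly the complementarity system: for every ordered pair $(i,j)$,
\[
F_{ij}(\mathbf{q}) \;:=\; \tau_{ij}\,s_j\!\Big(\textstyle\sum_{a} q_{aj}\Big)\;-\;d_i\!\Big(\textstyle\sum_{b} q_{ib}\Big)\;\ge\;0,\qquad q_{ij}\ge 0,\qquad q_{ij}\,F_{ij}(\mathbf{q})=0 .
\]
I would verify that any solution of this NCP reproduces all three equilibrium requirements, including the corner condition $s_j(0)\ge\max_i p^T_{i,c}/\tau_{ij}$ for non-producing countries, so that NCP solutions and tariff equilibria coincide.

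To solve the NCP, note that $F$ is continuous on $\mathbb{R}_+^{n\times n}$ because each $s_j$ and $d_i$ is continuous. Next I would establish \emph{a priori bounds}: in any solution $p^T_{i,c}=d_i(D_i)\le d_i(0)$, so on every active link $s_j(S_j)=p^T_{i,c}/\tau_{ij}$ is bounded, and strict monotonicity of $s_j$ caps each country's output, hence caps all flows by some $M$ (the degenerate case $d_i(0)=+\infty$ is handled instead through $\lim_{q\to\infty}d_i(q)=0$, which prevents unbounded total trade). Restricting to the compact convex set $Q_M=\{\mathbf{q}\ge 0:\ \sum_{ij}q_{ij}\le M\}$, the Hartman--Stampacchia theorem yields $\mathbf{q}^{*}\in Q_M$ solving the variational inequality $\langle F(\mathbf{q}^{*}),\mathbf{q}-\mathbf{q}^{*}\rangle\ge 0$ for all $\mathbf{q}\in Q_M$. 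The a priori bound guarantees $\mathbf{q}^{*}$ lies in the relative interior of the quantity constraint, so it solves the NCP on the full orthant and is therefore a tariff equilibrium.

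For acyclicity, suppose toward a contradiction that the equilibrium network contains a directed cycle $v_1\to v_2\to\cdots\to v_m\to v_1$, i.e.\ $q_{v_{k+1},v_k}>0$ for each $k$ (indices cyclically). Every node on the cycle has both an incoming and an outgoing active link, so it both consumes and produces a strictly positive quantity; in particular each consumer price equals $\tau_{v_k,v_{k-1}}\,s_{v_{k-1}}(\cdot)>0$ and is thus strictly positive. Since selling domestically ($h=v_k$, with $t_{v_k v_k}=0$) is always an available option,
\[
p^{T}_{v_{k+1},c}\;=\;\tau_{v_{k+1},v_k}\,p^{T}_{v_k,f}\;\ge\;\tau_{v_{k+1},v_k}\,p^{T}_{v_k,c}\;>\;p^{T}_{v_k,c},
\]
where the equality is the selection condition on the active link $v_k\to v_{k+1}$, the middle inequality uses $p^{T}_{v_k,f}=\max_h p^{T}_{h,c}/\tau_{h,v_k}\ge p^{T}_{v_k,c}$, and the strict inequality uses $t_{v_{k+1},v_k}>0$ together with $p^{T}_{v_k,c}>0$. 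Chaining this strict inequality around the cycle yields $p^{T}_{v_1,c}<p^{T}_{v_1,c}$, a contradiction; hence the network contains no directed cycle and is a DAG.

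The main obstacle is the existence step. First, the ad valorem wedge enters \emph{multiplicatively} as $\tau_{ij}s_j$, which destroys the monotonicity of $F$ that a specific (additive) tariff would provide; consequently I cannot invoke a strictly concave net-surplus potential or monotone-variational-inequality machinery, and must argue existence purely through continuity and compactness. Second, the a priori bounds and the verification that the variational-inequality solution stays off the artificial face $\sum_{ij}q_{ij}=M$ require care at the edges of the domain---where demand reaches its choke level $d_i(0)$ or supply approaches an asymptote---so this boundedness/coercivity argument should be isolated as a short lemma under the maintained regularity of $d_i$ and $s_j$. The acyclicity step, by contrast, is a brief arbitrage argument and poses no real difficulty once the positivity of the consumer prices along the cycle is established.
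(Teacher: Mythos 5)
Your proposal is correct, and its existence argument takes a genuinely different route from the paper's. The paper constructs a fictitious one-shot interaction among representative firms: each firm \(i\)'s strategy space is capped componentwise at the autarky-crossing quantities \(\bar q_{ji}\) (where \(s_i\) meets \(d_j\)), which makes the strategy set compact and convex by construction, and existence follows from Kakutani's theorem applied to a best-response correspondence whose fixed points are then checked to satisfy \eqref{clearing}--\eqref{selection}. You instead characterize equilibria as solutions of the complementarity system \(F_{ij}(\mathbf{q})=\tau_{ij}s_j(\sum_a q_{aj})-d_i(\sum_b q_{ib})\ge 0\), \(q_{ij}\ge 0\), \(q_{ij}F_{ij}(\mathbf{q})=0\), and solve it by Hartman--Stampacchia on a truncated orthant plus a coercivity step. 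The two routes are tightly linked: at any of the paper's fixed points a capped flow \(q_{ji}=\bar q_{ji}\) forces \(p^T_{j,c}\le d_j(\bar q_{ji})=s_i(\bar q_{ji})\le p^T_{i,f}\), so effective revenue never strictly exceeds marginal cost and the paper's fixed points satisfy exactly your complementarity conditions; the caps \(\bar q_{ji}\) play the role of your artificial bound \(Q_M\). What the paper's route buys is that compactness and the non-bindingness of the artificial constraint come for free from the primitives; what your route buys is a crisp equilibrium characterization and off-the-shelf variational-inequality machinery.

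One step in your write-up is stated too loosely and is precisely the lemma you promise to isolate: the a priori bound you derive applies to NCP solutions (equivalently, to equilibria), but the truncation argument needs a bound on solutions of the variational inequality on \(Q_M\), which are not yet known to solve the NCP. The fix is short. At any VI solution \(\mathbf{q}^*\) on \(Q_M\), moving in the direction \(-e_{ij}\) (the \((i,j)\) coordinate) is feasible whenever \(q^*_{ij}>0\), so every active flow satisfies \(F_{ij}(\mathbf{q}^*)\le 0\). But if \(\sum_{ij}q^*_{ij}=M\), some active flow is at least \(M/n^2\), hence both \(\sum_a q^*_{aj}\) and \(\sum_b q^*_{ib}\) are at least \(M/n^2\); then \(d_i(\sum_b q^*_{ib})\to 0\) as \(M\to\infty\) while \(\tau_{ij}s_j(\sum_a q^*_{aj})\) stays bounded away from zero (by \(\lim_{q\to\infty}d_i(q)=0\) and strict monotonicity of \(s_j\)), so \(F_{ij}(\mathbf{q}^*)>0\) for \(M\) large, a contradiction. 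With that lemma in place your existence proof is complete. Your acyclicity argument is the same chaining-of-strict-inequalities contradiction as the paper's; it is in fact slightly more careful, since you establish strict positivity of consumer prices along the cycle, without which the step from \(p^T_{v_{k+1},c}\ge \tau_{v_{k+1},v_k}\,p^T_{v_k,c}\) to a strict inequality would fail --- a detail the paper's proof leaves implicit.
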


\begin{proof}
See Appendix~\ref{proof:prop_dag_existence}.
\end{proof}

The firm selection condition imposes a strong constraint on trade flows: it prevents the existence of trade cycles. If a country exports to another, and that country exports to a third, and so on, trade cannot eventually flow back to the original exporter. Intuitively, firms always choose the most profitable market, and exporting along a cycle would contradict profit maximization. Since this reasoning applies to cycles of any length, for any pair of countries \(i\) and \(j\), it cannot be that both \(q_{ij}\) and \(q_{ji}\) are strictly positive.\footnote{A similar argument appears in related contexts, such as the formation of interbank credit networks under bilateral profit maximization (see \citealp{AnufrievDeghiPanchenkoPin2025}).}

DAG structures are also fragile to perturbations \citep{presanis2013conflict,aleta2022cycle}: small changes in tariffs or trade costs can modify effective revenues, introduce cycles, and violate acyclicity. Restoring a DAG may require extensive reconfiguration of trade flows, so the effects of local policy changes can propagate unpredictably across the global trade network.

The equilibrium condition \eqref{selection} applies to any exporting country, whether a pure exporter or a transit country. Producer prices are endogenously determined by the most profitable export opportunity, so exporters behave as price takers relative to international market conditions. This has two main implications. First, the trade network topology compresses domestic margins, as producer prices are anchored to the best external opportunity net of tariffs. Second, exporters are highly sensitive to external shocks: changes in tariffs or consumer prices elsewhere in the network immediately affect domestic producer prices.
It can even happen that, if foreign destinations offer higher effective revenues than domestic sales, firms will exclusively serve external markets, and domestic consumption will rely entirely on imports.

\subsection{Welfare Effects of Tariffs}
\label{sec:welfare_effects}

Having characterized the structure of trade flows, we now turn to welfare analysis.  
In the presence of trade, welfare in each country has three components:
\begin{itemize}
    \item \textbf{Consumer surplus}: the net benefit that consumers derive from purchasing the good at the prevailing consumer price.
    \item \textbf{Firm profits}: the net benefit that producers obtain from selling the good at the prevailing producer price.
    \item \textbf{Government revenues}: income collected through tariffs on imports.
\end{itemize}

Formally, the consumer surplus in country \(i\) is given by
\[
w^T_{i,c} = \int_0^{\sum_j q_{ij}} \left( d_i (x) - p_{i,c}^T \right) dx = 
\int_0^{d^{-1}_i \left( p_{i,c}^T \right) } \left( d_i (x) - p_{i,c}^T \right) dx , 
\]
while firm profits are
\[
w^T_{i,f} = \int_0^{\sum_j q_{ji}} \left(p_{i,f}^T - s_i (x) \right) dx
= \int_0^{s^{-1}_i \left( p_{i,f}^T \right) } \left(p_{i,f}^T - s_i (x) \right) dx
.
\]
Government revenues collected by country \(i\) are
\[
r^T_i = \sum_j t_{ij} \, p^T_{j,f} \, q_{ij}.
\]
Hence, total welfare in country \(i\) is
\[
W^T_i = w^T_{i,c} + w^T_{i,f} + r^T_i.
\]

\medskip

We now investigate how changes in tariffs affect welfare.  
Consider a marginal increase in \( t_{ij} \), the tariff imposed by country \( i \) on imports from country \( j \).  As long as the structure of the trade network remains unchanged---that is, the set of active trade links is unaffected---the welfare effects are relatively predictable. In country \(i\), firms benefit from reduced foreign competition, whereas consumers are harmed by higher import prices. In country \(j\), firms suffer from the loss of market access, while consumers may gain if exporters lower their prices in response to diminished external demand. The effect on government revenues in both countries remains ambiguous, depending on the magnitude of trade volume adjustments and resulting price changes.

However, if the tariff increase is large enough to change the structure of the trade network---for instance, by making the link from \(j\) to \(i\) unprofitable---the resulting reconfiguration of trade flows can lead to non-marginal, and possibly unintended, consequences.  
Such discrete changes in the network may produce discontinuities in equilibrium prices, quantities, and welfare levels, making comparative statics highly non-linear and policy outcomes harder to anticipate.

\medskip

The following example illustrates these mechanisms with a simple three-country model.

\section{Example}
\label{example}

To illustrate the mechanisms outlined above, we present a fully fictitious numerical example of trade flows and welfare in a three-country economy.

We consider a hypothetical homogeneous good -- wine -- traded among three regions: the European Union (EU, Country 1), the United States (USA, Country 2), and China (Country 3). Imagine that prices are expressed in US dollars per bottle, and quantities are measured in millions of bottles. The example is entirely illustrative.

The indirect supply and demand functions for each region are linear:

\begin{table}[h]
\centering
\begin{tabular}{@{}lll@{}}
\toprule
\textbf{Country} & \textbf{Supply Function} & \textbf{Demand Function} \\
\midrule
EU (1) & $s_1(q) = 2 + 0.5q$ & $d_1(q) = 8 - q$ \\
USA (2) & $s_2(q) = 3 + 0.5q$ & $d_2(q) = 7 - 0.8q$ \\
China (3) & $s_3(q) = 5 + q$ & $d_3(q)= 8 - q$ \\
\bottomrule
\end{tabular}
\end{table}

We analyze two trade policy scenarios:

\paragraph{Scenario 1 (Initial Trade Pattern).}
Initially, only China imposes a 10\% tariff on imports from the EU and the USA. 
The EU exports wine to the USA, and the USA exports wine to China.
There are $m=5$ active flows to determine, with a system of $m=5$ linear equations derived from \eqref{clearing}--\eqref{selling}. Consistency with \eqref{selection} is also verified. 
The computed quantities and prices (rounded to two decimal places) are summarized in Table \ref{tab:scenario1}.\footnote{%
The Mathematica code used for the computations in this example is available \href{https://github.com/paolopin/Supplementary-material-for-Network-Effects-of-Tariffs}{here}.
}

\begin{table}[h]
\centering
\caption{Scenario 1: Initial Trade Pattern}
\label{tab:scenario1}
\begin{tabular}{@{}lcc@{}}
\toprule
\textbf{Trade Flow} & \textbf{Quantity (million bottles)} & \textbf{Price (USD per bottle)} \\
\midrule
EU domestic sales ($q_{11}$) & 3.30 & 4.70 \\
EU exports to USA ($q_{21}$) & 2.11 & 4.70 \\
USA domestic sales ($q_{22}$) & 0.76 & 4.70 \\
USA exports to China ($q_{32}$) & 2.65 & $\frac{5.17}{1.1} = 4.70$ \\
China domestic sales ($q_{33}$) & 0.17 & 5.17 \\
\bottomrule
\end{tabular}

\vspace{0.4cm}

\begin{tabular}{@{}lc@{}}
\toprule
\textbf{USA Welfare Component} & \textbf{Value (million USD)} \\
\midrule
Consumer surplus & 2.90 \\
Firm profits & 3.29 \\
\bottomrule
\end{tabular}
\end{table}

The USA collects no government revenues in this scenario, as no tariffs are imposed on EU imports.

\paragraph{Scenario 2 (USA Introduces a Tariff).}
The USA now imposes a 20\% tariff on EU wine imports. In response, the EU redirects exports toward China, and US firms no longer find it profitable to export to China. 
Now, $m=4$ active flows determine the allocation. 
The computed quantities and prices (rounded to two decimal places) are summarized in Table \ref{tab:scenario2}.

\begin{table}[h]
\centering
\caption{Scenario 2: After USA Tariff}
\label{tab:scenario2}
\begin{tabular}{@{}lcc@{}}
\toprule
\textbf{Trade Flow} & \textbf{Quantity (million bottles)} & \textbf{Price (USD per bottle)} \\
\midrule
EU domestic sales ($q_{11}$) & 3.19 & 4.81 \\
EU exports to China ($q_{31}$) & 2.42 & $\frac{5.29}{1.1} \simeq 4.81$ \\
USA domestic sales ($q_{22}$) & 2.31 & 5.15 \\
China domestic sales ($q_{33}$) & 0.29 & 5.29 \\
\bottomrule
\end{tabular}

\vspace{0.4cm}

\begin{tabular}{@{}lc@{}}
\toprule
\textbf{USA Welfare Component} & \textbf{Value (million USD)} \\
\midrule
Consumer surplus & 2.13 \\
Firm profits & 1.33 \\
\bottomrule
\end{tabular}
\end{table}

Again, the USA collects no government revenues, as the EU stops exporting to the USA entirely.

\medskip

This illustrative example shows how tariffs can not only reduce domestic consumer welfare, but also harm domestic producers if trade patterns shift and competition in foreign markets intensifies. Moreover, it highlights how tariff policies may ultimately fail to generate additional revenue when the global trade network adjusts endogenously.





\section{Conclusion}
\label{conclusion}

This short paper develops a model in which country-specific tariffs shape trade flows, prices, and welfare in a global economy with a single homogeneous good. We characterize equilibrium trade flows under the assumptions of consumer market clearing and firm selection based on maximization of effective revenue. The resulting structure of trade is described by a Directed Acyclic Graph (DAG), with important implications for the formation of prices and the sensitivity of producers to external market conditions.

Tariffs affect welfare through three channels: consumer surplus, firm profits, and government revenues. When tariffs are adjusted marginally and the trade network remains fixed, welfare changes are relatively predictable: domestic firms are protected, consumers are harmed, exporters lose access to foreign markets, and government revenues respond ambiguously depending on price and quantity adjustments.

However, if tariff changes are large enough to reconfigure the network of active trade flows, welfare effects become highly non-linear and harder to predict. Discrete changes in the trade network can generate discontinuities in equilibrium prices, quantities, and welfare levels. Our fictitious numerical example, based on hypothetical trade flows between the EU, USA, and China, illustrates how the imposition of a tariff can inadvertently harm domestic consumers, fail to generate government revenues from tariffs, and expose domestic firms to intensified foreign competition.

An important direction for future research is to endogenize tariff-setting decisions by modeling them as a network formation game. In such a framework, countries would set tariffs strategically, not only to maximize their domestic welfare, but also to influence the structure of global trade links. This approach raises several possibilities: the existence of multiple network equilibria, the role of strategic patience in steering the system toward favorable outcomes, and the impact of international agreements not merely on efficiency, but also on equilibrium selection. Understanding the interplay between policy choices and network formation would significantly deepen our grasp of global trade dynamics.


\setcounter{section}{0} \global\long\def\thesection{\Alph{section}}
\setcounter{equation}{0} \global\long\def\theequation{\Alph{section}.\arabic{equation}}
\setcounter{proposition}{0} \global\long\def\theproposition{\Alph{proposition}}
\setcounter{lemma}{0} \global\long\def\thelemma{\Alph{lemma}}

\setlength{\abovedisplayskip}{9pt}   
\setlength{\belowdisplayskip}{9pt}    

\clearpage

\begin{center}
     \textbf{\Large Appendix}
     \vspace{-0.5cm}
\end{center}

\section{Proof of Proposition~\ref{prop:dag_existence} (page~\pageref{prop:dag_existence})}
\label{proof:prop_dag_existence}

\begin{proof}
\textbf{Existence.}
We define a fictitious one-shot interaction among representative firms, one for each country.\footnote{%
We are not formally defining a game, as firms behave as price takers and do not act strategically when choosing quantities, but only when choosing markets. One could think of it as a Cournot game with a continuum of firms within each country.}
There are \( n \) agents (representative firms).
For each agent \( i \), the action space is a column vector \( \bm{q}_i \in \mathbb{R}_+^n \), where \( q_{ji} \) denotes the quantity produced by firm \( i \) for market \( j \).
We impose the constraint 
\[
\bar{q}_{ji} = 
\begin{cases}
0 & \text{if } s_i(0) \geq d_j (0) \\
q \mbox{ such that } s_i(q) = d_j (q), & \mbox{otherwise}.
\end{cases}
\]
This is well defined, because functions are strictly monotonic for positive values, so that each $\bar{q}_{ji}$ is univocally defined.
Also, this threshold ensures that the overall action profile is a non-empty, compact, and convex subset of \( \mathbb{R}_+^{n \times n} \).

Given an action profile, each firm \( i \) computes the consumer prices \( p_{j,c}^T \) in each market \( j \), according to \eqref{clearing}, and its own marginal cost \( p_{i,f}^T \), according to \eqref{selling}.
Then, firm \( i \) \emph{best responds} as follows:
\[
q_{ji} =
\begin{cases}
0 & \text{if } \frac{p_{j,c}^T}{1+t_{ji}} < p_{i,f}^T, \\
\bar{q}_{ji} & \text{if } \frac{p_{j,c}^T}{1+t_{ji}} > p_{i,f}^T, \\
\text{any } q_{ji} \in \left[0, \bar{q}_{ji} \right] & \text{if } \frac{p_{j,c}^T}{1+t_{ji}} = p_{i,f}^T.
\end{cases}
\]

The resulting best--response correspondence is non-empty, convex-valued, and upper hemicontinuous.
Hence, by Kakutani's fixed-point theorem, a fixed point exists.
It is straightforward to verify that any fixed point satisfies conditions \eqref{clearing}--\eqref{selection}, and thus constitutes a tariff equilibrium.

\medskip
\textbf{DAG property.}
Suppose, for contradiction, that in equilibrium there exists a directed cycle of strictly positive trade flows among a set of countries \( \{i_1, \dots, i_m\} \), meaning that
\[
q_{i_1,i_2} > 0, \quad q_{i_2,i_3} > 0, \quad \dots, \quad q_{i_m,i_1} > 0.
\]
By the firm selection condition, for each \(k\), firms in country \(i_k\) choose to export to country \(i_{k+1}\) (modulo $m$) because it offers the maximal effective revenue, that is:
\[
\frac{p^T_{i_{k+1},c}}{1+t_{i_{k+1},i_k}} \geq \frac{p^T_{h,c}}{1+t_{h,i_k}} \quad \text{for all } h.
\]
In particular, comparing with \(i_k\)'s domestic market, we obtain:
\[
\frac{p^T_{i_{k+1},c}}{1+t_{i_{k+1},i_k}} \geq p^T_{i_{k},c}
\ \Rightarrow \
p^T_{i_{k+1},c} > p^T_{i_{k},c}.
\]
since all tariffs are positive.
However, this leads to a contradiction as it should hold for every $i_k \in \{i_1, \dots, i_m\}$ in the full cycle.

Therefore, no directed cycle of strictly positive flows can exist, and the network of positive trade flows forms a Directed Acyclic Graph (DAG).
This concludes the proof.
\end{proof}

\bibliographystyle{ecta}
\bibliography{biblio}

    \end{document}